



\documentclass[nonacm]{aamas}


\usepackage{balance} 

\usepackage{subfigure}
\usepackage{framed}



\setcopyright{ifaamas}
\acmConference[AAMAS '23]{Proc.\@ of the 22nd International Conference
on Autonomous Agents and Multiagent Systems (AAMAS 2023)}{May 29 -- June 2, 2023}
{London, United Kingdom}{A.~Ricci, W.~Yeoh, N.~Agmon, B.~An (eds.)}
\copyrightyear{2023}
\acmYear{2023}
\acmDOI{}
\acmPrice{}
\acmISBN{}



\acmSubmissionID{750}


\title[AAMAS-2023 Formatting Instructions]{Distributed Mechanism Design in Social Networks}


\author{Haoxin Liu}
\affiliation{
  \institution{ShanghaiTech University}
  \city{Shanghai}
  \country{China}}
\email{liuhx@shanghaitech.edu.cn}

\author{Yao Zhang}
\affiliation{
  \institution{ShanghaiTech University}
  \city{Shanghai}
  \country{China}}
\email{zhangyao1@shanghaitech.edu.cn}

\author{Dengji Zhao}
\affiliation{
  \institution{ShanghaiTech University}
  \city{Shanghai}
  \country{China}}
\email{zhaodj@shanghaitech.edu.cn}


\begin{abstract}
Designing auctions to incentivize buyers to invite new buyers via their social connections is a new trend in mechanism design~\cite{zhao22mechanism}. The challenge is that buyers are competitors and we need to design proper incentives for them to invite each other. For selling a single item, many interesting mechanisms have been proposed. However, all the mechanisms require the seller or a third party to be trustworthy to execute the mechanisms. In addition, the owner of the mechanism will know all the connections of the network after the execution, which poses a potential privacy issue. Hence, distributed mechanisms to avoid the privacy issue are more appealing in practice. Therefore, in this paper, we propose the first distributed mechanism in social networks without revealing buyers' private connections to anyone, and it achieves complete decentralization that does not rely on any trustworthy third party. Moreover, the centralized reduction of our mechanism also offers a novel way to compute players' contributions compared to the existing solutions. 
\end{abstract}



\keywords{Distributed mechanism design; Invitation incentive; Social network}


         
\newcommand{\BibTeX}{\rm B\kern-.05em{\sc i\kern-.025em b}\kern-.08em\TeX}


\begin{document}


\pagestyle{fancy}
\fancyhead{}


\maketitle 


\section{Introduction}
Mechanism design over social networks has recently attracted much attention from the researchers in AI~\cite{zhao22mechanism,li2022diffusion}. The design takes participants' connections into consideration and utilizes their connections to attract more participants, which works as a way to promote the mechanism to more participants via the participants' invitations. The challenge is that the participants may compete with each other in the game. For example, in auction, they compete for the limited resources, and in matching, they compete for the same preferred match. The existing mechanisms for the traditional settings 
cannot avoid the competition to incentivize participants to invite new participants. Therefore, new mechanisms are demanded in the new setting to attract more participants. We have seen a great progress in the directions of auctions, matching and coalitional games recently~\cite{DBLP:conf/ijcai/GuoH21,kawasaki2021mechanism,Yao2022coalitions}.


In this paper, we continue the study on auctions for selling a single item in the network setting. 
The existing mechanisms are centralized mechanisms which require the seller or a trusted center to execute the mechanism. 
However, after the execution, the center will know all participants' private connections, which will pose a potential privacy issue. 
Moreover, the seller can be anyone on the network which is not necessarily trustworthy and the mechanism design on networks aims to attract new participants without any third party. 
Therefore, to be more appealing in practice,
our goal is to design a distributed auction on a network which can be executed without a trusted center,
and does not reveal participants' private connections to anyone. 
Additionally, the network is distributed and is not owned by any single entity, which is also a natural environment for distributed mechanisms.

Different from centralized mechanisms, distributed mechanisms distribute the execution of the mechanisms to all participants. 
Therefore, the participants need to do more other than just reporting their private information/type as required in a centralized mechanism. This will give participants a larger action space and it becomes more challenging to prevent their manipulations. 

In the distributed mechanism design literature, researchers have tried to implement centralized mechanisms such as VCG in a distributed manner~\cite{parkes2004distributed}. 
However, the existing implementation still requires a trusted entity to do minimal computation or oversee the execution. 
Similarly, for the existing centralized mechanisms of selling a single item in social networks, it is also very hard to implement them in a completely distributed manner since they need the information about the graph structure like cut-points~\cite{li2017mechanism,li2022diffusion}. 

Against this background, we propose the first distributed mechanism for selling a single item in networks without relying on any trusted entity to oversee the computation. 
Moreover, our mechanism also provides a novel way to design incentives for the participants to invite others. Intuitively, our design rewards more buyers with a more decent reason than the existing methods, which gives them stronger incentives to participate in the mechanism. The reward of each buyer is computed according to her ability to connect the seller to the winner and also her ability to bring more valuable buyers. In summary, our contributions advance the state of the art in the following ways:
\begin{itemize}
    \item We propose the very first distributed mechanism for selling a single item in social networks, which does not require the assistant from a trusted center. 
    \item Our distributed mechanism also proposes a novel way to design the incentive for buyers to invite others, which gives more buyers positive rewards. This will incentivize buyers to participate.
\end{itemize}


\section{Related Work}\label{sec:relatedWork}
\textbf{Auctions in soical networks}. The first auction mechanism in social networks that can incentivize buyers to invite their neighbors is called Information Diffusion Mechanism (IDM)~\cite{li2017mechanism}. The main idea of IDM is to give compensation to cut-points of the highest bidder. Based on IDM, \citeauthor{zhao2018gidm} \cite{zhao2018gidm} extend it to homogeneous multi-item auctions, where each buyer only requires one item. Later, \citeauthor{li2022diffusion} \cite{li2022diffusion} characterize the necessary and sufficient conditions of incentive compatibility for all single-item auctions in social networks. Other related work along this research line can be found in a survey~\cite{DBLP:conf/ijcai/GuoH21}.

\noindent\textbf{Invitation incentives in other settings}. The idea that recruiting more participants by invitation has also been applied to many other game-theory settings. For example, \citeauthor{kawasaki2021mechanism} \cite{kawasaki2021mechanism} and \citeauthor{makoto2022matching} \cite{makoto2022matching} propose methods to incentivize invitation in matching market, and \citeauthor{Yao2022coalitions} \cite{Yao2022coalitions} initiate the model for invitation incentives in cooperative games. An overview about the problem in all these settings is given by \citeauthor{zhao2021bluesky}~\cite{zhao2021bluesky,zhao22mechanism}.

\noindent\textbf{Distributed mechanism design}. There also exists a rich literature about distributed mechanism design. \citeauthor{MondererT99} \cite{MondererT99} initialize the study on a simple single-item distributed auction problem where agents must forward messages from other agents to a center. 
Then, Feigenbaum et al. \cite{FeigenbaumPS01,FeigenbaumPSS02} firstly introduce the concept of the distributed algorithmic mechanism design. 
Following them, \citeauthor{parkes2004distributed} \cite{parkes2004distributed} and \citeauthor{Petcu06mdpop:faithful} \cite{Petcu06mdpop:faithful} have studied the distributed implementation of the VCG mechanism by proposing some principles to guide the distribution of computation. For other settings, \citeauthor{ShneidmanP03} \cite{ShneidmanP03} study the distributed implementation for interdomain routing. In our paper, to model a distributed mechanism in the new setting, we mainly follow the inspiration from the formal specification for distributed mechanism design introduced in~\cite{shneidman2004specification}. 
Some other related works have been summarized in \cite{feigenbaum2007distributed, feigenbaum2002distributed}.

However, all the above distributed mechanisms rely on a third party to verify some of the buyers' operations, while our distributed mechanism achieves complete decentralization that does not rely on any trustworthy third party.


\section{The Model}\label{sec:model}
Consider the scenario where a seller $S$ sells one item in a social network, but she can only communicate with some of the buyers in the network. We model the network as an undirected graph $G=(V, E)$, where $V=N\cup\{S\}$ represents the set of all nodes in the network and the edge set $E$ contains all the connections among the nodes. 
The node set $N=\{1,2,...,n\}$ contains all potential buyers, and each buyer $i\in N$ has a private valuation $v_i\geq 0$ of receiving the item and a set of neighbors $r_i\subseteq V\backslash\{i\}$, where $j\in r_i$ if there is an edge between $i$ and $j$ in $E$. 
Particularly, denote $r_S$ as the seller's neighbor set.
We assume that a node can only directly communicate with her neighbors, which is also a feature of a modern social network. 
Formally, let $\theta_i=(v_i, r_i)$ be the private type of buyer $i\in N$ and $\Theta_i=\mathbb{R}_{\geq0}\times\mathcal{P}(V)$ be the type space of $i$, where $\mathcal{P}(V)$ is the power set of $V$. 
Let the joint vector $\theta=(\theta_1, \theta_2, ..., \theta_n)$ denote the type profile of all buyers and $\Theta=\Theta_1\times\Theta_2\times\cdots\Theta_n$ be the type profile space. 
Denote $\theta_{-i}$ as the type profile of all buyers except $i$ and $\theta$ can also be written as $(\theta_i, \theta_{-i})$.

In this model, we assume that initially only the seller's neighbors $r_S$ are aware of the sale as she cannot inform the others by herself. The goal is to incentivize the informed buyers to use their connections to invite more buyers to join the sale. A buyer's invitation is modeled by reporting her neighbors here. Invited buyers can further invite other buyers, and eventually only the buyers who are invited can join the sale. Hence, we have to determine who are valid buyers according to their reported neighbors. 
\begin{definition}\label{def:valid}
Given the buyers' reported type profile $\theta'$, for each buyer $i\in N$ with reported type $\theta'_i=(v'_i, r'_i)$, build an edge between $i$ and $j$ if $j\in r'_i$. We say a buyer $i$ is \textbf{valid} if there exist a path connecting $i$ with the seller. Denote the subgraph containing all valid buyers as $G(\theta')$.
\end{definition}

In principle, buyers who are not invited are not aware of the sale and will not report anything. However, to make the definitions clean, we assume all buyers report in the model, but only valid buyers are considered in the sale. 
Given the above setting, our goal is to design distributed mechanisms.
To make the definitions easy to follow, we first define the centralized mechanisms. 

\begin{definition}\label{def:centralizedMechanism}
A centralized (direct-revelation) mechanism in social networks is a 2-tuple $M=(\pi, p)$, where $\pi=\{\pi_i\}_{i\in N}$ is the allocation function and $p=\{p_i\}_{i\in N}$ is the payment function of all buyers. Particularly, $\pi_i: \Theta\rightarrow[0,1]$ and $p_i:\Theta\rightarrow \mathbb{R}$ are the allocation and payment functions for $i$ respectively, and they further 
satisfy that for all reported type profile $\theta'\in \Theta$, (1) for all invalid buyers $i\in N\backslash G(\theta')$, $\pi_i(\theta')=0$ and $p_i(\theta')=0$, and (2) for all valid buyers $i\in G(\theta')$, $\pi_i(\theta')$ and $p_i(\theta')$ are independent of the reports of the invalid buyers.
\end{definition}

Given the buyers’ reported type profile $\theta'$, $\pi_i(\theta')$ represents the probability for allocating the item to buyer $i$.
Given a mechanism $M$, a reported type profile $\theta'$, the utility of a buyer $i$ of type $\theta_i$ is defined as $u_i(\theta_i, \theta', (\pi, p))=\pi_i(\theta')\cdot v_i-p_i(\theta').$ 
In the centralized scenarios, we say a mechanism $M$ is \textit{incentive compatible} if truthfully revealing the type is a buyer's dominant strategy no matter what the others report.
\begin{definition}\label{def:ic}
A centralized mechanism $M=(\pi, p)$ is \textbf{incentive compatible} (IC) if for all $i\in N$, all $\theta'_i\in\Theta_i$ and all $\theta'_{-i}\in \Theta_{-i}$,
$$u_i(\theta_i, (\theta_i,\theta'_{-i}), (\pi, p))\geq u_i(\theta_i, (\theta'_i,\theta'_{-i}), (\pi, p)).$$
\end{definition}

Another desirable property is individual rationality, which guarantees that a buyer will not suffer a loss in the mechanism as long as she truthfully reports her type.
\begin{definition}\label{def:ir}
A centralized mechanism $M=(\pi, p)$ is \textbf{individually rational} (IR) if for all buyers $i\in N$, all $\theta_i\subseteq \Theta_i$, and $\theta'_{-i}\in \Theta_{-i}$,
$$u_i(\theta_i, (\theta_i,\theta'_{-i}), (\pi, p))\geq0.$$
\end{definition}


In centralized scenarios, the only action a buyer needs to do is reporting her type to the center, which is the only space for manipulation. 
Different from centralized mechanisms which are executed by a trusted center, distributed mechanisms distribute the execution to all participants.
This will enlarge the action space of the buyers and cause more possibilities to manipulate.
Therefore, in a distributed mechanism, we also need to guarantee that the buyers execute the mechanism correctly.
Hence, it is necessary to introduce the concept about \textit{strategy} to capture how a buyer behaves in all states of the mechanism.
Let $s_i$ denote the strategy of buyer $i$ which is parameterized by $i$'s type $\theta_i$ and let $\Sigma_i$ be $i$'s strategy space, which includes all strategies $i$ can perform. 
Let $s(\theta)=(s_1(\theta_1), s_2(\theta_2), ..., s_n(\theta_n))$ be a strategy profile of all buyers under type profile $\theta$ and let $s_{-i}(\theta)=(s_j(\theta_j))_{j\neq i,j\in N}$. 

\begin{definition}\label{def:DM}
A distributed mechanism $d^M$ is a tuple $d^M=(\Sigma, (\pi, p), s^M)$, where $\Sigma=(\Sigma_1, ..., \Sigma_n)$ is the strategy space of all buyers, $\pi=\{\pi_i\}_{i\in N}$ is the allocation function, $p=\{p_i\}_{i\in N}$ is the payment function, and $s^M=(s^M_1, ..., s^M_n)\in \Sigma$ is the intended strategy of the mechanism. Particularly, $\pi_i:\Sigma\rightarrow\{0,1\}$ and $p_i:\Sigma\rightarrow \mathbb{R}$ are the allocation and payment functions for $i$ respectively.
\end{definition}

For every buyer $i$,  the intended strategy $s^M_i\in\Sigma_i$ can be considered as a series of algorithms or actions that the mechanism requires $i$ to perform. $s^M_i$ is parameterized by the private type $\theta_i$ of buyer $i$, and $s^M_i(\theta_i)$ indicates which actions buyer $i$ should execute in every state of the mechanism. 
In the centralized scenario, the strategy of each buyer is only reporting her type to the center, so the strategy space is reduced to type space, i.e. $\Sigma_i=\Theta_i$. Since there is only one kind of action, which is private information revelation, we can define $s_i(\theta_i)=\theta'_i$ by viewing $s_i$ as a mapping function from her type $\theta_i$ to the type space $\Theta_i$, and we usually intend each buyer to truthfully report her type, i.e., $s^M_i(\theta_i)=\theta_i$. However, in the distributed scenario, the strategy space is very complex and does not have a standard structure, which includes many varieties of actions besides reporting type. Here we refer to the canonical literature \cite{shneidman2004specification} and decompose the strategy $s_i$ into three kinds of actions, $s_i =(t_i,q_i,f_i)$, which are information-revelation action $t_i$, message-passing action $q_i$, and computational action $f_i$. For each buyer $i\in N$, $t_i$ decides whether to reveal her type truthfully, $q_i$ determines how she passes messages to her neighbors (for example, she can decide whether to deliver to one neighbor or multiple neighbors), and $f_i$ decides how to conduct local computation based on messages she receives.
Similarly, the intended strategy $s^M_i$ can also be represented as $(t^M_i, q^M_i, f^M_i)$.

Instead of using $\pi(\theta')$ and $p(\theta')$ to represent the outcomes that depend only on the reported information, we now must update the notation to $\pi(s(\theta))$ and $p(s(\theta))$ that depend on the sequence of actions taken by buyers. 
Hence, the utility of a buyer is updated to $u_i(\theta_i, s(\theta), (\pi, p))=\pi_i(s(\theta))\cdot v_i-p_i(s(\theta))$. 
We have mentioned that the challenges in distributed mechanism design are different from centralized mechanism design, because the computation of a distributed mechanism is performed by the strategic buyers in the absence of a trusted center. The buyers can manipulate the computation to their own interests. In such a scenario, the pursuit of IC might be impossible, because there might be no single computational behavior that is optimal regardless of what the other buyers do~\cite{feigenbaum2007distributed}. Hence, we will focus on a more suitable solution concept called \textit{ex-post incentive compatibility}, which can be viewed as a compromise of distributing the computation to the buyers.
\begin{definition}\label{def:ex-post}
A distributed mechanism $d^M=(\Sigma, (\pi, p), s^M)$ is \textbf{ex-post incentive compatible} if for all $\theta\in\Theta$, all buyers $i\in N$, and all $s_i\in\Sigma_i$,
$u_i(\theta_i, (s^M_i, s^M_{-i}), (\pi, p))\geq u_i(\theta_i, (s_i, s^M_{-i}), (\pi, p)).$
\end{definition}

It means no one can obtain a higher utility by deviating from the equilibrium that everyone executes the intended strategy. If a mechanism is ex-post IC, then $s^M$ is an ex-post Nash equilibrium. Although weaker than a dominant strategy equilibrium, ex-post IC is also a strong solution concept because it does not require buyers to have any knowledge of the private types of the others.

In addition, for any buyer $i\in N$, if we restrict her strategy to $s_i(\theta_i)=(t_i, q^M_i, f^M_i)$, there always exists a centralized mechanism $M$ such that $E[\pi(s(\theta'))]=\pi'(\theta')$ and $E[p(s(\theta'))]=p'(\theta')$\footnote{We take the expectation results of the distributed mechanism because distributed mechanisms usually have randomized outcomes in practice and the expectation results are only used for analysis.}, where $\pi$ is the outcome of the distributed mechanism $d^M$, $\pi'$ is the outcome of the corresponding centralized mechanism $M$, $p$ and $p'$ are the payment functions of these two mechanisms respectively.
We call this mechanism $M$ the \emph{centralized reduction mechanism} (CRM) of $d^M$ and say $d^M$ is a \emph{distributed implementation} of $M$. 

\section{The Mechanism}\label{sec:mechanism}

In this section, we will formally describe the very first distributed mechanism in social networks called the \emph{Sequential Resale Auction} (SRA). There already exist many centralized mechanisms in social networks such as the Information Diffusion Mechanism \cite{li2017mechanism, li2022diffusion}. These mechanisms are highly dependent on the cut-points of the network and they only compute incentives for the cut-points which do not form complete paths. 
In addition, it is hard to locate the cut-points in decentralized settings since no one can know the structure of the whole graph, 
and we cannot pass the item distributively without a complete path. 
Therefore, our distributed mechanism will not implement the existing centralized diffusion auctions. 
Moreover, the centralized reduction of our mechanism gives another novel way to design the diffusion incentive which is based on their connection power to the item receiver. 

\subsection{Sequential Resale Auction}
We describe the distributed 
auction as a three-stage process and buyers will perform different kinds of actions in each stage.
In the first stage, the buyers diffuse the sale information to their neighbors.
In the second stage, the buyers collect their invited neighbors' bids and represent them to join the sale.
In the last stage, we do sequential resales from the seller to the final winner.

\noindent \textbf{Stage 1 (Top-down Diffusion)}: The first stage is \textit{top-down diffusion}, in which the sale information is spread in the social network starting from the original seller $S$. Any buyer $i\in N$ who is aware of the sale can decide her information-revelation action in this stage.

\begin{definition}[Information-revelation Action]\label{def:information-revelation}
Given buyer $i$'s type $\theta_i=(v_i, r_i)$, her information-revelation action $t_i$ is to decide her bid $v'_i$ in the sale and choose neighbors $r'_i\subseteq r_i$ to invite, i.e., $t_i=(v'_i, r'_i)$. The intended information-revelation action is to truthfully reveal her type, i.e., $t^M_i=(v_i,r_i)$.
\end{definition}

The intended information-revelation action is the same as the reporting action in centralized mechanisms.
However, instead of reporting her valuation and neighbors to the seller in centralized scenarios, a buyer now only needs to invite her neighbors on her own interest and does not need to tell her valuation to anyone.
When buyer $i$ invites a neighbor $j\in r'_i$ to join the sale, the edge $e_{ij}$ becomes a directed edge from $i$ to $j$ and we say buyer $i$ is an \textit{inviter} of buyer $j$. A buyer may have multiple inviters on a network and she can further invite her neighbors except for her inviters. 
Finally, the social network becomes a connected directed graph $G'$ containing all valid buyers.
Note that the graph is unknown to any agent, and everyone only knows who invites her and who she invites. 

\noindent \textbf{Stage 2 (Bottom-up Aggregation)}: The second stage is \textit{bottom-up aggregation}, where each buyer determines her message-passing action $q_i$ and computational action $f_i$. 
In our distributed mechanism, each buyer may receive several messages from the neighbors she invites, and she can aggregate those messages into a new message called \emph{aggregated bid} and pass it to her inviters.
We first discuss each buyer's computational action $f_i$ of the aggregation process and describe the message-passing action later.
For simplicity, we denote the aggregated bid as $b_i\in\mathbb{R}_{\geq0}$ and use $B_i$ to represent the set of all bids buyer $i$ receives.
The computational action $f_i$ corresponds to an aggregation algorithm, which takes $B_i$ and her own bid $v'_i$ as input and outputs a new bid, and we denote the computational action space as $F$ to contain all possible aggregation algorithms. 

\begin{definition}[Computational Action]\label{def:compuatation}
Given a buyer $i$'s received bids set $B_i$, and her bid $v'_i$, the computational action $f_i\in F$ will generate her aggregated bid $b_i=f_i(B_i, v'_i)$. The intended computational action $f^M_i$ is to select the largest bid among all the bids collected by buyer $i$ as her aggregated bid, i.e., $f^M_i(B_i,v'_i)=\max(B_i\cup\{v'_i\})$.
\end{definition}

After generating $b_i$, the next action a buyer can manipulate is to choose whether to pass the message $b_i$ truthfully and which inviters to pass the message to. 
Since misreporting $b_i$ to $b'_i$ is actually the same as choosing another aggregation algorithm whose output is the misreported value $b'_i$ and reporting $b'_i$ truthfully, we categorize this kind of manipulation into $f_i$ and assume buyers will truthfully report their aggregated bids when considering message-passing action.
Hence, the message-passing action only cares about reporting the aggregated bid to one or more inviters, and we denote buyers' message-passing action space as $Q$ to contain all possible actions. 

\begin{definition}[Message-passing Action]\label{def:message-passing}
Buyer $i$'s message-passing action $q_i\in Q$ is to select one or more inviters to report her aggregated bid. The intended message-passing action $q^M_i\in Q$ is to randomly select one inviter. 
\end{definition}

The intended message-passing action is designed as above because all inviters are equivalent from a buyer's local view, and it prevents the buyer's bid from being aggregated multiple times. 
For an inviter who receives buyer $i$'s aggregated bid, the inviter still does not know which buyer the bid actually belongs to, and this can effectively protect buyers' privacy.
The second stage starts from the leaf nodes, who do not invite anyone and just report their bids, and ends until the original seller receives all her neighbors’ aggregated bids.
If all buyers execute the intended message-passing action, the social network will finally become a directed tree.
An example social network after this stage is shown in Figure~\ref{fig:stage2}. 

\begin{figure}[thbp]
\centering
\subfigure[]{
    \includegraphics[width=0.25\linewidth]{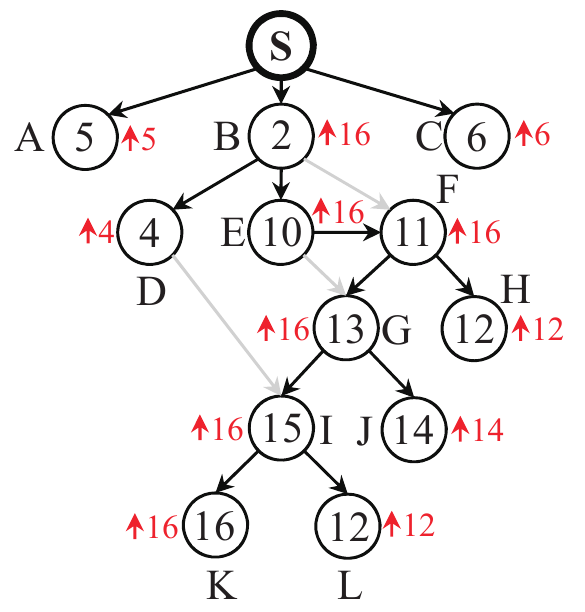}
    \label{fig:stage2}
    }
\subfigure[]{
    \includegraphics[width=0.25\linewidth]{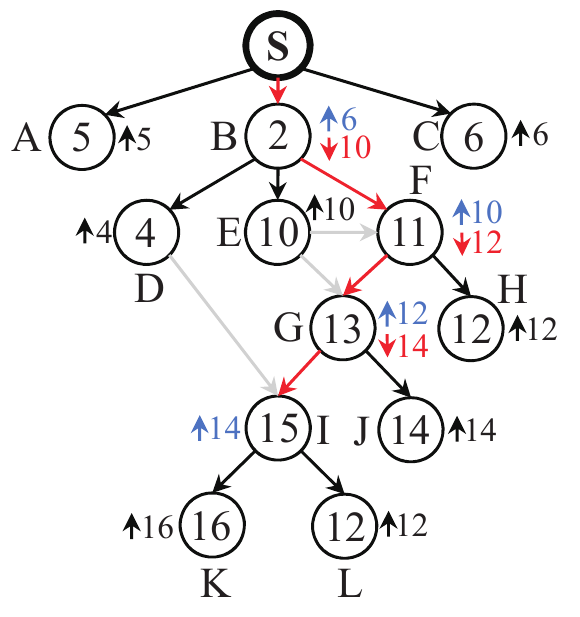}
    \label{fig:stage3}
    }
\caption{(a) An example of social network after bottom-up aggregation stage. The red number beside each node is the computation result using the intended algorithm. (b) An example of social network after top-down allocation stage. The blue number beside each node is the purchasing price to win the item, and the red number is the selling price.}
\label{fig:runningExample}
\end{figure}

\noindent \textbf{Stage 3 (Top-down Allocation)}: The third stage is \textit{top-down allocation}. We model the entire auction as a series of resales, and each resale is called a \emph{local auction}.
Suppose the item is resold to buyer $i$ currently, $i$ can initiate a local auction and notify all neighbors she invites in Stage 1 to join (including neighbors who do not pass messages to her in Stage 2). 
Notice that if a buyer $j$ in this local auction does not pass her bid to $i$ in Stage 2, 
then there must exist another path connecting the $i$ and $j$.
To prevent other buyers on the path from using the buyer $j$'s bid to compete with $j$,
$j$ should connect herself only to the current seller $i$ and disconnect from other inviters.
All participants affected by this disconnection should re-aggregate their bids.
For example, in Figure~\ref{fig:stage2}, suppose $B$ holds a local auction now. $F$ is invited by $B$, so $F$ should disconnect herself from $E$ and join $B$'s auction, such that $E$'s aggregated bid will become 10 since she cannot receive message from $F$ now. We do this because both $E$ and $F$ are the participants in the local auction held by $B$, then $E$ should not compete for the item with $F$'s bid. 
The definition of a local auction hosted by buyer $i$ is given as follows:
\begin{definition}\label{def:local-auction}
A local auction hosted by buyer $i$, $\hat{M}_i$, is composed of a local allocation function $\hat{\pi}^i=\{\hat{\pi}^i_j\}_{j\in r'_i}$ and a local payment function $\hat{p}^i=\{\hat{p}^i_j\}_{j\in r'_i}$, where $\hat{\pi}^i_j\in\{0,1\}$ and $\hat{p}^i_j\in \mathbb{R}$ are the local allocation and local payment for participant $j$ respectively. 
\end{definition}

The seller $i$ determines the local allocation results $\hat{\pi}^i$ and local payment results $\hat{p}^i$ based on her received bids set $B_i$, her own bid $v'_i$, and her purchasing price $\bar{p}_{i}$ which represents the amount she pays to win the item from the previous local auction\footnote{The purchasing price of the original seller is initialized as $0$.}. 
The local auction is a classic one-layer VCG auction with a reserve price which is equal to the purchasing price. 
The local auctions run iteratively from the original seller until someone chooses to keep the item, where all the local sellers form a resale path.
Note that a local seller actually participates in only two consecutive local auctions, once as a buyer and once as a seller. 
A running example of this stage is shown in Figure~\ref{fig:stage3}.

To prevent malicious manipulation on the purchasing price, we utilize the Distributed Ledger Technology (DLT)~\cite{sunyaev2020distributed} to encrypt the record of each purchasing price, and everyone can access the ledger to know the historical transactions. 
In terms of how to use a DLT, we may face three different situations:
(1) The exact DLT is a common knowledge, e.g., in a completely distributed environment like the digital world (the metaverse), all exchanges are recorded in a kind of DLT, where they may have a fixed DLT for all kinds of exchanges. In this case, we don’t need to propagate the information of the DLT.
(2) The need for a DLT is common, but which DLT is not common, e.g., in the digital world, we use multiple cryptocurrencies to exchange goods, and each market may have different choices. In this case, the agents of our mechanism also need to propagate the DLT information. Here, each agent doesn’t have any incentive to misreport the DLT, because the invitees will easily notice the manipulation because they cannot find the exchange records of the previous resales in a misreported DLT.
(3) The last case is that the network is owned by a centralized platform and DLT is not common to record exchanges. Then, an agent may not tell the DLT information to her neighbors at all so that she can increase her reserve price to gain more. To prevent such manipulation, we can make the rules of the mechanism public and the agents only join a mechanism where they believe no one can manipulate. Then each agent in our mechanism will also need to propagate the mechanism id where misreporting the id is not beneficial.

The above distributed mechanism is summarized as follows:
\begin{framed}
 \label{frame:DDAM}
 \noindent\textbf{Sequential Resale Auction (SRA)}
 
 \noindent\rule{\textwidth}{0.5pt}
\noindent (1) \textbf{Top-down diffusion}. The original seller $S$ starts to propagate the auction information to her neighbors. Each buyer who is aware of the sale executes her information-revelation action $t_i=(v'_i,r'_i)$, which contains both deciding her bid $v'_i$ and inviting her neighbors $r'_i\subseteq r_i$. The social network finally becomes a directed graph $G'$.
    
\noindent (2) \textbf{Bottom-up aggregation}. Each buyer executes the message-passing action $q_i$ to pass the result of the computational action $b_i=f_i(B_i,v'_i)$ to one of her inviters. This stage starts from all leaf node who invites no buyers, and terminates until the original seller receives all her neighbors' aggregated bids.

\noindent (3) \textbf{Top-down allocation}. The original seller $S$ starts the first local auction. For a local auction hosted by $i$, let $b^{1\text{st}}_{r'_i}$ and $b^{2\text{nd}}_{r'_i}$ be the highest and second-highest bid among all participants respectively. The local allocation and payment function of a participant $j\in r'_{i}$ are defined as below:
     \begin{itemize}
         \item[$\bullet$] \textbf{Local allocation function}:
         \begin{equation}\label{eq:allocation}
         \hat{\pi}^{i}_{j}=\left\{
            \begin{array}{lll}
            1,\quad \textrm{if}~v'_j=b^{1\text{st}}_{r'_i}, \text{ and } v_i<\max\{\bar{p}_{i}, b^{2\text{nd}}_{r'_i}\},\\
            0,\quad \textrm{otherwise}.
            \end{array}
        \right.
        \end{equation}
        \item[$\bullet$] \textbf{Local payment function}:
        \begin{equation}\label{eq:payment}
        \hat{p}^{i}_{j}=\left\{
            \begin{array}{ll}
            \max\{\bar{p}_{i}, b^{2\text{nd}}_{r'_i}\}, \quad &\textrm{if}~\hat{\pi}^{i}_{j}=1,\\
            0, \quad &\textrm{otherwise}.
            \end{array}
        \right.
        \end{equation}
     \end{itemize}
     After the local auction $\hat{M}_i$ finishes, the local winner $w$, i.e., $\hat{\pi}^{i}_{w}=1$, will host next local auction $\hat{M}_w$ and her payment in $\hat{M}_i$ will be the purchasing price $\bar{p}_{w}$ in the new local auction $\hat{M}_w$. The whole resale process terminates if a local seller $i$ keeps the item, i.e., $\Sigma_{j\in r'_i}\hat{\pi}_j^i=0$.
\end{framed}

In Equation~\eqref{eq:allocation}, the local seller will not allocate the item to anyone if her valuation is no less than the selling price, i.e., $\Sigma_{j\in r'_i}\hat{\pi}_j^i=0$ if $v_{i}\geq \max\{\bar{p}_{i}, b^{2\text{nd}}_{r'_i}\}$. 
In Equation~\eqref{eq:payment}, the local winner $w$ (if exists) pays the selling price, $\max\{\bar{p}_{i}, b^{2\text{nd}}_{r'_i}\}$, to the local seller $i$ and other buyers' payments are zero. 
Therefore, for a local seller $i$ except for the winner, the transactions she involves in include purchasing the item from previous local auction by paying $\bar{p}_{i}$, then selling the item to next local seller by charging $\max\{\bar{p}_{i}, b^{2\text{nd}}_{r'_i}\}$, so her utility can be represented as $u_i=\max\{\bar{p}_{i}, b^{2\text{nd}}_{r'_i}\}-\bar{p}_{i}$. 
For the winner $w$ who keeps the item, her utility can be represented as $u_w=v_w-\bar{p}_{w}$.

\subsection{Centralized Reduction of the SRA}\label{sec:CRM}
To facilitate a better understanding of our distributed mechanism, we will briefly discuss the centralized reduction mechanism of the sequential resale auction in this part.
The centralized setting is equivalent to the situation that everyone follows the intended message-passing action and intended computational action, and she can only manipulate how to reveal her private type in her information-revelation action $t_i=(v'_i,r'_i)$.
Therefore, everyone's strategy $s_i(\theta_i)=(t_i, q^M_i, f^M_i)$ is reduced to the reported type $\theta_i'$ which is consistent with traditional centralized mechanisms.

Reviewing the first stage of the SRA, each buyer does not need to report her valuation to anyone and only needs to invite her neighbors to diffuse the auction information. In the centralized setting, each buyer is required to report her type to the seller, so the seller can get access to the whole network structure $G(\theta')$ containing all valid buyers. 
Hence, in the centralized scenarios, the seller can quickly locate the highest bidder denoted as $z$ with $v'_z=v^{1\text{st}}_{G(\theta')}$, where we denote $v^{1\text{st}}_{\mathcal{D}}=\max_{i\in \mathcal{D}}v'_i$ to be the highest reported valuation in the subset $\mathcal{D}\subseteq G(\theta')$.



The social network becomes a randomized tree after the second stage in our distributed auction, we also generate a spanning tree $T$ randomly from 
$G(\theta')$ in the centralized reduction mechanism. 
Denote the set containing all possible spanning trees as $\mathcal{T}$.
The seller can determine the simple path from $S$ to $z$ in each spanning tree. 
We define a special class of paths and explain how it corresponds to the resale path in the third stage of our distributed auction. 
\begin{definition}\label{def:diffusion_path}
A \textbf{diffusion path} to buyer $m$ is a simple path from $S$ to $m$, denoted as $h^m=(h_0, h_1, ..., h_k)$, where $h_0 = S$, $h_k = m$, and it satisfies for any two buyers $h_i$, $h_j$ ($i<j-1$), there is no edge between $h_i$ and $h_j$ in the connected graph $G(\theta')$. That is, there are no back-edges between any two buyers on a diffusion path.
\end{definition}

For each spanning tree, if the simple path from $S$ to $z$ is not a diffusion path, we make a transformation on the path. 
A transformation runs as follows:
for each existing back-edge between $h_i$, $h_j$ ($i<j-1$) on the path $h^z$, remove all nodes between $h_i$ and $h_j$ on the path and add the back-edge into the path.
As we described in the third stage of the SRA, for any local auction, each neighbor of the local seller needs to disconnect herself from other inviters and add the edge to the local seller. 
The transformation is actually corresponding to this operation we mentioned before. 
The intuition of the centralized mechanism is to resell the item iteratively on the diffusion path. 
The difference between the selling price and the purchasing price of each local seller is considered as her connecting contribution on the diffusion path.
One place where the mechanism is different from the SRA is that we can enumerate all spanning trees in the centralized scenarios since the seller is aware of the entire network structure, so we will average a buyer's connection contribution over all spanning trees as her final payoff. 

The centralized reduction mechanism is summarized as follows:
\begin{framed}
 \label{frame:SRA}
\noindent\textbf{Centralized Reduction Mechanism of the SRA}

\noindent\rule{\textwidth}{0.5pt}

\noindent (1) Given a reported type profile $\theta'\in \Theta$, build the subgraph $G(\theta')$ of valid buyers, and find the valid buyer $z$ with the highest valuation in $G(\theta')$ (with random tie-breaking).


\noindent (2) For a spanning tree $T$ generated from $G(\theta')$, check whether the simple path from $S$ to $z$ is a diffusion path. If it is, go to (4); else, go to (3).


\noindent (3) Transform the simple path to a diffusion path.


\noindent (4) Denote the corresponding diffusion path from $S$ to $z$ as $h^z=\{S, h_1, h_2, ..., z\}$ and use $T_{-i}$ to represent the remaining buyer set without the participation of $i$ on the tree. Allocate the item on the current spanning tree where the allocation function can be recursively defined as:
\begin{equation}\label{eq:crm_pi}
     \pi_i^T(\theta')=\left\{
     \begin{array}{lll}
          1 &\text{if } i=h_j\in h^z, v'_i=v^{1\text{st}}_{T_{-h_{j+1}}},\\
          &\text{and }\sum_{k\in T_{-i}}\pi^T_k(\theta')=0,\\
          0 &\text{otherwise.}
     \end{array}
     \right.
 \end{equation}

\noindent (5) Denote the winner as $w=h_l\in h^z$. The payment function for the current spanning tree is defined as:
\begin{equation}\label{eq:crm_p}
     p_i^T(\theta')=\left\{
     \begin{array}{lll}
          v^{1\text{st}}_{T_{-w}} &\text{if }i=w,\\
          v^{1\text{st}}_{T_{-h_j}}-v^{1\text{st}}_{T_{-h_{j+1}}} &\text{if } i=h_j\in h^z, j<l,\\
          0 &\text{otherwise.}
     \end{array}
     \right.
 \end{equation}

\noindent (6) For each possible spanning tree generated from $G(\theta')$, repeat (2)-(5). 
In the process, for each buyer $i\in G(\theta')$, count her total number of wins denoted as $cnt(i)$, and her total payments denoted as $sum(p^T_i)$.

\noindent (7) Denote the number of all possible spanning trees as $|\mathcal{T}|$. 
The overall allocation function and payment function are defined as:
\begin{equation}\label{eq:crm_all}
     \pi_i(\theta')=\frac{cnt(i)}{|\mathcal{T}|},\quad p_i(\theta')=\frac{sum(p^T_i)}{|\mathcal{T}|}
 \end{equation}

\end{framed}

Next, we formally analyze the relationship between the mechanism described above and the SRA mechanism.
\begin{theorem}\label{thm:CRM}
The mechanism above is the centralized reduction mechanism of the sequential resale auction.
\end{theorem}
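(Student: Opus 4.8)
The plan is to show that the centralized reduction mechanism (CRM) described above equals the expectation of the SRA outcome when every buyer is restricted to the strategy $s_i(\theta_i) = (t_i, q^M_i, f^M_i)$, i.e., when buyers may only manipulate their information-revelation action while following the intended message-passing and computational actions. By Definition~\ref{def:DM} and the discussion following Definition~\ref{def:ex-post}, this amounts to checking $E[\pi(s(\theta'))] = \pi^{CRM}(\theta')$ and $E[p(s(\theta'))] = p^{CRM}(\theta')$ for every reported profile $\theta'$. The argument naturally splits into two layers: first, a structural correspondence between a single run of the SRA (with fixed random choices) and a single spanning tree in the CRM; second, a distributional claim that the SRA's randomness induces the uniform distribution over the spanning-tree set $\mathcal{T}$ used in step~(6) of the CRM.

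First I would fix the random coins of the SRA and analyze a single deterministic run. In Stage~2, each buyer executing $q^M_i$ picks exactly one inviter uniformly at random, so the ``message-flow'' edges form a spanning tree $T$ of $G(\theta')$; I would argue that as the coins range over all possibilities, this $T$ is uniform over $\mathcal{T}$ (or at least that the resulting outcome distribution matches averaging over $\mathcal{T}$ with the weights step~(7) uses — this is where I would need to be careful about whether $|\mathcal{T}|$ counts labeled spanning trees uniformly). Given the tree $T$, I would then trace Stage~3: the original seller's local auction, the disconnection/re-aggregation operation, and the iterative resales. The key claim is that the resale path produced by Stage~3 on $T$ coincides with the diffusion path $h^z$ obtained in steps~(2)--(3) of the CRM — in particular, that the disconnection operation (a neighbor $j$ of the current local seller $i$ severing her other inviters and re-aggregating) is exactly the path transformation of Definition~\ref{def:diffusion_path} that removes back-edge-spanned segments. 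I would prove this by induction on the local auctions along the resale path: if local seller $i=h_j$ holds the item with purchasing price $\bar p_i$, then the highest bid she sees among her invited neighbors in Stage~3 (after re-aggregation) equals $v^{1\text{st}}_{T_{-h_{j+1}}}$ restricted appropriately, so her selling price $\max\{\bar p_i, b^{2\text{nd}}_{r'_i}\}$ matches $v^{1\text{st}}_{T_{-h_{j+1}}}$, and hence the purchasing price passed forward matches the CRM's recursion.

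With that correspondence in hand, the payment bookkeeping follows: for a local seller $i=h_j$ with $j<l$, her SRA utility $\max\{\bar p_i, b^{2\text{nd}}_{r'_i}\} - \bar p_i$ telescopes to $v^{1\text{st}}_{T_{-h_j}} - v^{1\text{st}}_{T_{-h_{j+1}}}$, which is exactly $p^T_i(\theta')$ in Equation~\eqref{eq:crm_p} (noting payment is the negative of utility for a zero-valuation transaction); for the winner $w=h_l$, $\bar p_w = v^{1\text{st}}_{T_{-w}}$ matches; and every buyer off the resale path gets $0$ in both. Allocation is immediate: the unique SRA winner on run $T$ is the unique CRM winner on tree $T$, so counting wins over all trees and dividing by $|\mathcal{T}|$ reproduces Equation~\eqref{eq:crm_all}. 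Finally, taking expectations over the SRA's coins — which I will have identified with the uniform average over $\mathcal{T}$ — gives $E[\pi(s(\theta'))]=\pi_i(\theta')$ and $E[p(s(\theta'))]=p_i(\theta')$, and one checks the two side conditions of Definition~\ref{def:centralizedMechanism} (invalid buyers get zero; valid buyers' outcomes ignore invalid reports) hold because $G(\theta')$ already discards invalid buyers.

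The main obstacle I anticipate is the distributional step: showing that the random spanning structure generated by ``each buyer independently picks one inviter uniformly'' is genuinely the same distribution (or yields the same outcome average) as the uniform distribution over $\mathcal{T}$ that the CRM sums over in step~(6). In a general graph these need not coincide — the per-node uniform choice weights trees by a product of in-degree reciprocals, not uniformly — so I expect the honest resolution is either (i) to define $|\mathcal{T}|$ and the sum in step~(6) with exactly those product weights rather than as a naive uniform count, or (ii) to restrict attention to the structural quantities that actually matter (the resale path and the values $v^{1\text{st}}_{T_{-h_{j+1}}}$), and show these have the same law under both processes. Pinning down which reading the authors intend, and reconciling the statement ``$|\mathcal{T}|$ is the number of all possible spanning trees'' with the SRA's local sampling, is the crux; the rest is a careful but routine induction along the resale path.
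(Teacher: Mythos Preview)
Your core approach matches the paper's: the paper also fixes a spanning tree $T$ and proves by induction along the resale path that the purchasing price satisfies $\bar p_{h_j}=v^{1\text{st}}_{T_{-h_j}}$, from which the per-tree equality of allocation and payment follows. Your telescoping of the local seller's utility into $v^{1\text{st}}_{T_{-h_{j+1}}}-v^{1\text{st}}_{T_{-h_j}}$ is exactly the computation the paper has in mind (though the paper compresses it to a single inductive sentence and the remark ``it is easy to deduce'').

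Where you go beyond the paper is precisely the distributional layer you flag as the main obstacle. The paper's proof begins with ``Without loss of generality, we only need to compare the allocation and payment results of the two mechanisms on the same spanning tree $T$'' and never revisits whether the SRA's per-buyer uniform choice of a single inviter actually induces the uniform measure over $\mathcal{T}$ that step~(7) averages against. Your observation that local uniform choices weight trees by a product of inverse in-degrees rather than uniformly is correct, and the paper does not resolve it; it simply takes the per-tree correspondence as sufficient. So you are not missing a trick---you have identified a genuine looseness in the paper's own argument, and your proposed resolutions (reinterpreting the weights in step~(6), or showing only the outcome-relevant quantities share a law) are the right directions if one wants a fully rigorous statement.
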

\begin{proof}
Without loss of generality, we only need to compare the allocation and payment results of the two mechanisms on the same spanning tree $T$. In the following proof, we always use $h_j$ to represent $i$ on the path to $z$. 
We first prove that each local seller's purchasing price is the highest bid without her participation, i.e., $\bar{p}_{i}=v^{1\text{st}}_{T_{-h_{j}}}$ using the mathematical induction method.
Suppose the condition holds for $i$'s previous buyer, i.e., $\bar{p}_{h_{j-1}}=v^{1\text{st}}_{T_{-h_{j-1}}}$. Then, $\bar{p}_{h_j}=\max\{\bar{p}_{h_{j-1}}, b^{2\text{nd}}_{r'_{h_{j-1}}}\}$ where $b^{2\text{nd}}_{r'_{h_{j-1}}}=\max_{k\in r'_{h_{j-1}}\setminus h_{j}}b_k$, so $\bar{p}_{h_j}=v^{1\text{st}}_{T_{-h_{j}}}$ since $T_{-h_{j}}=T_{-h_{j-1}}\cup r'_{h_{j-1}}\setminus h_{j}$.
On this basis, it is easy to deduce that the allocation and payment results of the two mechanisms are the same on the same spanning tree.
\end{proof}

\noindent\emph{Discussion}. 
(1) Different from the existing centralized mechanisms like the Information Diffusion Mechanism (IDM)~\cite{li2017mechanism} which only gives rewards to the critical ancestors, our mechanism can reward more buyers including those non-cut-points, which gives all buyers stronger incentive to participate in the mechanism. 
(2) The seller's revenue is always no less than that of traditional VCG among neighbors without diffusion (see Proposition~\ref{thm:revenue}). 
(3) Most importantly, the mechanism presents a method to calculate the payoff of each buyer over complete paths, which provides a sound basis for its corresponding distributed implementation, and this is why we show this mechanism here.

\section{Evaluations}\label{sec:theoreticalAnalysis}
In this section, we provide theoretical analysis for the sequential resale auction. We also conduct experiments to compare the centralized reduction mechanism of the SRA with the IDM, which is a representative of existing centralized mechanisms. 

\subsection{Theoretical Analysis}
In this part, 
without loss of generality, we consider one randomized instance of running the distributed auction in the following proofs since the properties of IC and IR hold for the whole mechanism if they hold in all instances.
Suppose $z$ is the highest valid buyer and $w$ is the winner in the instance. Let $\mathcal{Y}=\{S, y_1, y_2, ..., w\}$ represent the resale path containing all local sellers involved in the resale process. According to the payment function of the SRA defined in Eq.~\ref{eq:payment}, only buyers on the resale path are involved in money transactions and may gain nonnegative utilities. Therefore, we classify all buyers into three different categories: 
(1) The final winner: $w$. (2) Local sellers: $\mathcal{Y}\setminus \{S, w\}$, i.e., all buyers on the resale path except the original seller and final winner. (3) Other buyers: $\forall i\notin \mathcal{Y}$, i.e., all other buyers who are not on the resale path.

Given the above classification, we will prove that the sequential resale auction satisfies the properties of IR and ex-post IC. Before that, we first show that a buyer's payment is independent of her bid when all buyers execute the intended strategy.

\begin{lemma}\label{lem:independent}
When everyone executes the intended message-passing action $q_i^M$ and intended computational action $f_i^M$, each buyer $i$'s payment is independent of her bid.
\end{lemma}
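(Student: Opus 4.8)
The plan is to analyze each of the three buyer categories separately, using the characterization already established: when everyone follows $q_i^M$ and $f_i^M$, the network becomes a directed tree, and each buyer's utility takes one of the closed forms derived right after the mechanism box, namely $u_i = \max\{\bar{p}_i, b^{2\text{nd}}_{r'_i}\} - \bar{p}_i$ for a local seller and $u_w = v_w - \bar{p}_w$ for the winner. Since I only need to show $p_i$ is independent of buyer $i$'s own bid $v'_i$ (holding all other buyers' strategies fixed at the intended ones), I will trace how $v'_i$ enters the quantities $\bar{p}_i$, $b^{2\text{nd}}_{r'_i}$, and the payment expressions.

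First I would handle buyers $i \notin \mathcal{Y}$: they are not on the resale path, never host or win a local auction, and by Eq.~\eqref{eq:payment} their payment is $0$, which is trivially independent of $v'_i$. (I should note that whether $i$ lands on the resale path can itself depend on $v'_i$, but in every case the payment is still $0$: either $i$ is off the path and pays $0$, or raising $v'_i$ puts $i$ on the path as a local seller or winner, which is the next case.) Second, for a local seller $h_j = i$ on the resale path, the key observation is that $i$'s purchasing price $\bar{p}_i$ equals $v^{1\text{st}}_{T_{-h_j}}$ — the highest bid in the tree excluding $i$'s subtree through $h_j$ — by the induction argument in the proof of Theorem \ref{thm:CRM}; crucially this quantity does not involve $v'_i$. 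Her payment is $\max\{\bar{p}_i, b^{2\text{nd}}_{r'_i}\}$, where $b^{2\text{nd}}_{r'_i}$ is the second-highest bid among her invitees' aggregated bids, again not involving $v'_i$ (her own bid is compared separately against the selling price to decide whether she keeps the item, but it is not part of $b^{1\text{st}}$ or $b^{2\text{nd}}$ of the local auction she hosts, nor of the one she participates in as the winning bidder — there her winning bid is the first-highest and the price she pays is the second-highest, independent of her bid, by the standard second-price property). Third, for the winner $w$, her payment is $\bar{p}_w = \max\{\bar{p}_{y_{\text{last}}}, b^{2\text{nd}}_{r'_{y_{\text{last}}}}\}$ from the local auction she wins, which by the same second-price logic equals the highest competing bid and does not depend on $v'_w$.

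The main obstacle I anticipate is the bookkeeping subtlety that $v'_i$ does affect \emph{which} category buyer $i$ falls into and \emph{where} on the resale path she sits — so "independence" cannot be read off a single fixed formula. The clean way to dispatch this is to argue: if $i$'s payment is nonzero then $i$ is on the resale path, and on the resale path the payment is a max of two bid-quantities each computed over players/subtrees strictly excluding $i$ (this is exactly the content of the $\bar{p}_i = v^{1\text{st}}_{T_{-h_j}}$ identity combined with the second-price structure of each local auction); and if $i$'s payment is zero it is vacuously independent of $v'_i$. Since in the centralized/intended-strategy regime the resale path, the spanning tree, and every $v^{1\text{st}}_{T_{-h_j}}$ are all determined by $\theta_{-i}$ together with the random tree choice but not by $v'_i$, the payment $p_i$ is a function of $\theta_{-i}$ (and randomness) alone, which is the claim. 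I would close by remarking this is the monotonicity-style lemma that feeds directly into the IR and ex-post IC proofs to follow.
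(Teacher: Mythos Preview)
Your three-case breakdown mirrors the paper's proof almost exactly: the paper also observes that under $q_i^M$ and $f_i^M$ the network becomes a tree, that $\bar{p}_i$ is aggregated entirely from a branch not containing $i$ (you phrase this via the identity $\bar{p}_{h_j}=v^{1\text{st}}_{T_{-h_j}}$ from Theorem~\ref{thm:CRM}, the paper just says ``comes from another branch''), and then handles winner, local seller, and off-path buyer in turn. That part is fine.

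The gap is in your final paragraph. You assert that ``the resale path, the spanning tree, and every $v^{1\text{st}}_{T_{-h_j}}$ are all determined by $\theta_{-i}$ \ldots\ but not by $v'_i$,'' and conclude that $p_i$ is a function of $\theta_{-i}$ alone. This is false: the resale path \emph{does} depend on $v'_i$. If $i$ is off the resale path and raises $v'_i$ above the global highest bid, the path now terminates at $i$ and her payment jumps from $0$ to $\bar{p}_i=v^{1\text{st}}_{T_{-i}}>0$. So the payment is \emph{not} globally independent of the bid; your piecewise argument (``if nonzero then a formula not involving $v'_i$, if zero then vacuously independent'') does not establish constancy because the pieces take different values and the boundary between pieces moves with $v'_i$.

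The resolution is that the lemma, as the paper actually uses it, is a \emph{within-category} statement: for a fixed category, the payment formula does not involve $v'_i$. Look at how it is invoked in the proof of Theorem~\ref{thm:ex-postIC}: ``Based on Lemma~\ref{lem:independent}, we only need to consider how a buyer's bid affects her category.'' The cross-category transitions are then handled separately in Cases~1.1--1.3. So you correctly identified the subtlety, but you should drop the attempt to prove global independence and instead state the lemma's content as conditional on category --- that is all the downstream IC argument needs.
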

\begin{proof}
As we mentioned before, the social network becomes a directed tree after the second stage if all buyers execute the intended message-passing action. 
Hence, the purchasing price $\bar{p}_{i}$ of each local seller $i$ must come from another branch she does not belong to, which is independent of her bid aggregated only in the branch she belongs to. Therefore, (1) for the winner $w$, her payment is $\bar{p}_{i}$ which is not dependent on her bid; (2) for other local seller $i\in\mathcal{Y}\setminus\{S,w\}$, her utility is $u_i=\max\{\bar{p}_{i}, b^{2\text{nd}}_{r_i}\}-\bar{p}_{i}$, where $b^{2\text{nd}}_{r_i}$ is the second-highest aggregated bid among her children, which is also independent of her bid; 
(3) for all other buyers, their payments are always 0.
\end{proof}


We then show that no buyer in the SRA will gain a negative utility as long as she uses her true valuation as her bid and everyone executes intended message-passing action and intended computational action.

\begin{theorem}\label{thm:ir}
The sequential resale auction is IR.
\end{theorem}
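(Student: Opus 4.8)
The plan is to prove IR by a case analysis over the three-way partition of buyers introduced just above the statement: the final winner $w$, the other local sellers $\mathcal{Y}\setminus\{S,w\}$, and all buyers off the resale path. I would work inside a single instance of the distributed auction (as already noted, IR for the whole mechanism follows if it holds in every instance), assume the buyer in question bids her true valuation and everyone executes the intended message-passing and computational actions, and check $u_i\ge 0$ in each of the three cases. Lemma~\ref{lem:independent} has already extracted the relevant utility expressions, and the paragraph preceding the statement records that only buyers on the resale path ever transfer money, so most of the bookkeeping is done.

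First I would dispose of the two easy cases. For any buyer $i\notin\mathcal{Y}$ we have $\pi_i(s(\theta))=0$ and $p_i(s(\theta))=0$, hence $u_i=0$. For a local seller $i\in\mathcal{Y}\setminus\{S,w\}$, Lemma~\ref{lem:independent} gives $u_i=\max\{\bar{p}_{i},\,b^{2\text{nd}}_{r'_i}\}-\bar{p}_{i}$, which is nonnegative because $\max\{\bar{p}_{i},\,b^{2\text{nd}}_{r'_i}\}\ge \bar{p}_{i}$; intuitively she buys the item at $\bar{p}_{i}$ and resells it at a price floored by her own reserve $\bar{p}_{i}$.

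The only case that actually uses the truthfulness hypothesis is the winner $w$. Since $w$ keeps the item, her own local auction $\hat{M}_w$ ends the resale, i.e.\ $\sum_{j\in r'_w}\hat{\pi}^w_j=0$; by Eq.~\eqref{eq:allocation} this forces $v_w\ge\max\{\bar{p}_{w},\,b^{2\text{nd}}_{r'_w}\}$, where we use $v'_w=v_w$ since $w$ bids truthfully. In particular $v_w\ge\bar{p}_{w}$, so $u_w=v_w-\bar{p}_{w}\ge 0$, completing the case analysis.

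I expect the ``hard part'' here to be cosmetic rather than real: one should make explicit that $\bar{p}_{w}$ is a well-defined quantity that does not itself depend on $w$'s bid — which is precisely what Lemma~\ref{lem:independent} together with the directed-tree structure after Stage~2 guarantees — so that the inequality $v_w\ge\bar{p}_{w}$ is not circular; and one should read Eq.~\eqref{eq:allocation} carefully enough to see that the reserve $\bar{p}_{w}$ genuinely sits inside the $\max$ that the winner's value must dominate, which is exactly what rules out the winner ever overpaying. Everything else is immediate from the definitions.
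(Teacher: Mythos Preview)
Your proposal is correct and follows essentially the same approach as the paper's own proof: the paper also does the three-way case split (winner, local sellers on $\mathcal{Y}\setminus\{S,w\}$, others), invokes Eq.~\eqref{eq:allocation} to get $v_w\ge\max\{\bar{p}_{w},b^{2\text{nd}}_{r'_w}\}\ge\bar{p}_{w}$ for the winner, reads off $\max\{\bar{p}_{i},b^{2\text{nd}}_{r'_i}\}-\bar{p}_{i}\ge 0$ for local sellers, and notes utility $0$ for everyone else. The only difference is the order of the cases and your extra remark on non-circularity via Lemma~\ref{lem:independent}, which the paper leaves implicit.
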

\begin{proof}
Assume that buyer $i$ reveals her valuation truthfully. 
(1) If she is the winner, $i=w$, her utility is $u_i=v_i-\bar{p}_{i}\geq0$ because she will choose to keep the item only if $v_i\geq\max\{\bar{p}_{i}, b^{2\text{nd}}_{r'_i}\}\geq \bar{p}_{i}$ according to Eq.~\ref{eq:allocation}.
(2) If she is a local seller, $i\in \mathcal{Y}\setminus\{S, w\}$, her utility is $u_i=\max\{\bar{p}_{i}, b^{2\text{nd}}_{r'_i}\}-\bar{p}_{i}\geq0$.
(3) For any other buyer, her utility is 0.
Therefore, the SRA is individually rational.
\end{proof}

Now we show that the intended strategy $s^M$ is an ex-post Nash equilibrium for all buyers, i.e., no one can gain a higher utility by deviating the intended strategy herself.

\begin{theorem}\label{thm:ex-postIC}
The sequential resale auction is ex-post IC. 
\end{theorem}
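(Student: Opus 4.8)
The plan is to show that no buyer $i$ can profitably deviate from the intended strategy $s^M_i = (t^M_i, q^M_i, f^M_i)$ given that everyone else plays $s^M_{-i}$, by analyzing the three components of a deviation separately and in a carefully chosen order. Because Lemma~\ref{lem:independent} already tells us that, when $i$ sticks to the intended message-passing and computational actions, her payment is independent of her bid, the core of the argument is to handle the two ``new'' distributed manipulations first — deviating in the computational action $f_i$ (i.e. misreporting the aggregated bid) and in the message-passing action $q_i$ (i.e. passing to the wrong set of inviters or to several of them) — and only then reduce the remaining case to a standard VCG/IDM-style argument about the information-revelation action $t_i$.

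First I would fix a buyer $i$ and split into the three categories of the classification already set up in the excerpt: $i = w$ (the final winner), $i \in \mathcal{Y}\setminus\{S,w\}$ (a local seller on the resale path), and $i \notin \mathcal{Y}$ (off the resale path). For the message-passing action: since all inviters are symmetric from $i$'s local view and $i$'s own bid is only ever propagated along the single branch it travels on, I would argue that passing the aggregated bid to more than one inviter can only cause $i$'s own valuation to be counted against her in a later local auction (it creates a back-edge that the disconnection rule in Stage~3 will force, re-aggregating bids so that $i$'s value competes with $i$), and passing to a different single inviter changes nothing by symmetry; hence $q^M_i$ weakly dominates. For the computational action: reporting an aggregated bid $b'_i \ne \max(B_i \cup \{v'_i\})$ either lowers it below the true max — which can only lose $i$ a chance to host a local auction where she would have had nonnegative utility, or loses the winner the item — or raises it above the true max, which risks $i$ being forced to buy the item at a price exceeding what any descendant (including herself) values it at, giving strictly negative utility; so $f^M_i$ is weakly optimal. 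Throughout, I would lean on the utility formulas already derived: $u_i = \max\{\bar p_i, b^{2\text{nd}}_{r'_i}\} - \bar p_i \ge 0$ for a local seller and $u_w = v_w - \bar p_w$ for the winner, together with the fact that $\bar p_i$ is the highest bid without $i$'s branch.

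Having pinned down $q_i = q^M_i$ and $f_i = f^M_i$ as weakly optimal, the residual case is a deviation only in $t_i = (v'_i, r'_i)$, and here I would invoke Lemma~\ref{lem:independent}: $i$'s payment does not depend on $v'_i$, so bidding truthfully is optimal by the usual second-price/critical-value argument — overbidding can only make $i$ win (or host) at a price above $v_i$, underbidding can only forgo profitable wins. For the neighbor report $r'_i$: hiding a neighbor $j$ can only weakly decrease the aggregated bid that reaches $i$ and weakly shrink the set of buyers, which by the monotonicity of VCG-with-reserve in each local auction cannot raise $\max\{\bar p_i, b^{2\text{nd}}_{r'_i\setminus\{\text{winner}\}}\}$ and hence cannot raise $i$'s resale margin, while also not helping $i$ as the final winner since it never lowers $\bar p_w$. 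Combining the three, $s^M_i$ is a best response to $s^M_{-i}$ for every type profile $\theta$, which is exactly ex-post IC per Definition~\ref{def:ex-post}.

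The main obstacle I anticipate is the message-passing deviation combined with the Stage~3 disconnection/re-aggregation mechanics: a buyer could try to pass her bid ``up the wrong branch'' so as to sit, as a local seller, between a high descendant and a low purchasing price, and one must verify that the forced disconnection-and-re-aggregation rule always neutralizes exactly this — i.e. that whenever $i$ ends up hosting a local auction, the pool of bids she faces never includes her own value in a way that helps her, and that the purchasing price she inherits is unaffected by where she routed her message. Making this precise requires tracking how a single misrouted message perturbs the tree globally and arguing that the net effect on $i$'s two consecutive local auctions is weakly nonpositive; I expect this bookkeeping, rather than the $t_i$-deviation (which is essentially classical), to be the delicate part of the proof.
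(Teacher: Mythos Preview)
Your proposal is correct and takes essentially the same approach as the paper: a coordinate-wise deviation analysis over the three action components $(t_i,q_i,f_i)$, using the same three-way role split (winner / local seller / off-path buyer) and the same reliance on Lemma~\ref{lem:independent} for bid-independence of the payment. The only cosmetic difference is the order of the components --- the paper treats $t_i$ first (its Cases~1.1--1.6, splitting $v'_i$ and $r'_i$), then $q_i$ briefly, then $f_i$ (Cases~3.1--3.3, reducing the over-reporting case to the $v'_i$-misreport already handled), whereas you handle $q_i,f_i$ first and reduce $t_i$ to the classical second-price argument last.
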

\begin{proof}
We discuss the effects of different actions by different categories of buyers separately.

1. For determined $q^M_i, f^M_i$ and other buyers' strategies $s^M_{-i}$, we first prove that disobeying the intended information-revelation action $t^M_i$ cannot gain a higher utility for any buyer. 
As the information-revelation action concerns the buyer's revealed valuation $v'_i$ and her invited neighbors $r'_i$, we first prove that, for all kinds of buyers, revealing the true valuation maximizes their utilities when their invited neighbors are also determined. 
Based on Lemma~\ref{lem:independent}, we only need to consider how a buyer's bid affects her category.

\noindent \textbf{Case 1.1}
For the winner $w$, any $v'_w\geq v_w$ will only increase the aggregated bid on the resale path. The allocation result will not change, so her utility remains unchanged.
For any bid $v'_i<v_i$, she may still be the winner with unchanged utility. If she becomes a local seller, her new utility will be $u'_i=\max\{\bar{p}_{i}, b^{2\text{nd}}_{r'_i}\}-\bar{p}_{i}\leq v_i-\bar{p}_{i}=u_i$ because she is the original winner with $v_i\geq \max\{\bar{p}_{i}, b^{2\text{nd}}_{r'_i}\}$. Otherwise, her utility will degenerate to 0. 

\noindent \textbf{Case 1.2}
For a local seller $i\in\mathcal{Y}\setminus\{S,w\}$, we can defer that her valuation is not larger than her aggregated bid since she is not the final winner.
Therefore, the message she passes to her inviter will not change for any $v'_i<v_i$, so the allocation result and her utility will not change.
If she reveals a larger valuation to become the new winner, her utility will be $u'_i=v_i-\bar{p}_{i}\leq\max\{\bar{p}_{i}, b^{2\text{nd}}_{r'_i}\}-\bar{p}_{i}=u_i$ according to allocation rule. So misreporting valuation cannot gain a higher utility for any local seller.

\noindent \textbf{Case 1.3}
For other buyers $i\notin \mathcal{Y}$, her utility is 0. To gain some payoffs, she needs to reveal a larger enough bid and becomes the new winner. However, her purchasing price will be the original highest bid among all buyers which must be larger than her valuation, then her utility will be a negative value.

Above all, we have showed that no buyers can obtain a higher utility by revealing bids different from her valuation. Then, given buyer $i$'s declared valuation $v'_i=v_i$, we prove that for all kinds of buyers, inviting all neighbors (i.e., $r'_i=r_i$) maximizes their utilities. Before that, we can easily find the fact that, as the intended computational action outputs the highest value among all received bids, inviting less neighbors may only decrease her aggregated bid which is used to participate in previous local auctions.

\noindent \textbf{Case 1.4}
If $i$ is the winner when diffusing the auction information to all neighbors, her utility is $v_i-\bar{p}_{i}$. 
For winner who is the highest bidder, $i=z$, she will still win for any $r'_i\neq r_i$.
And for winner who is an ancestor of $z$, she will belong to other buyers if she is not the ancestor of $z$ by inviting less neighbors. 
Otherwise, her utility always keeps unchanged since her selling price may only decrease or remain such that the allocation result will not change and her purchasing price is not related to her neighbors.

\noindent \textbf{Case 1.5}
For a local seller, her utility is $u_i=\max\{\bar{p}_{i}, b^{2\text{nd}}_{r'_i}\}-\bar{p}_{i}$.
For any $r'_i\subseteq r_i$, her aggregated bid may decrease, which will affect the allocation result of previous local auctions. 
If she can still win the item, her selling price $\max\{\bar{p}_{i}, b^{2\text{nd}}_{r'_i}\}$ must be no larger than before since the second-highest bid among her children may decrease when she invites less neighbors, and her purchasing price is independent of her neighbors. So her new utility is always no larger than before.
Otherwise, she may not be able to win the item from previous local auctions, which causes that she belongs to other buyers now and her new utility will be 0.

\noindent \textbf{Case 1.6}
For other buyers $i\notin\mathcal{Y}$ and any $r'_i\subseteq r_i$, she will always be other buyers with utility equaling to 0.

Hence, any buyer $i\in N$ has no incentives to violate the intended information-revelation action $t^M_i$. 

2. For determined $t^M_i, f^M_i$ and other buyers' strategies $s^M_{-i}$, we then show that any buyer has no incentives to violated the intended message-passing action $q^M_i$. As we mentioned in section~\ref{sec:mechanism}, passing a false aggregated bid is equivalent to executing another computational action with different outputs but executing the intended message-passing action, so we only care about reporting the aggregated bid to one or more inviters for the message-passing action here and leave the discuss for passing a false aggregated bid later. For any $q'_i\neq q^M_i$, if a buyer reports nobody, she will have no chance to be on any resale path, thus her utility will always be 0. If she reports her aggregated bid to more than one inviters, the bid may be propagated through more than one paths, and it may raise her purchasing price then reduce her utility.

3. For determined $t^M_i, q^M_i$ and other buyers' strategies $s^M_{-i}$, we prove that executing any other computational action with different outputs ($f'_i\neq f^M_i$) cannot increase the utility for any buyer.

\noindent \textbf{Case 3.1} For the winner $w$, her utility is $u_{w}=v_{w}-\bar{p}_{w}$. If her aggregation algorithm (i.e. her computational action) outputs a lower value than before, she may lose the item so $u'_{w}=0\leq u_{w}$. Otherwise, the allocation result and her purchasing price will not change so her utility keeps unchanged.

\noindent \textbf{Case 3.2} For a local seller, i.e., $\forall y_j\in \mathcal{Y}\setminus\{S, w\}$, her utility is $u_{y_i}=\max\{\bar{p}_{y_i}, b^{2\text{nd}}_{r_{y_i}}\}-\bar{p}_{y_i}$. 
If her chosen algorithm outputs a higher value than before, $f'_i>f^M_i$, it is equivalent to the situation that she decides a larger enough bid $v'_i=f'_i$ and executes the intended computational action. This situation has been discussed in the first point.
If she outputs a lower value than before, the resale path may change to a new one and she may not belongs to the new resale path such that her utility will be zero.

\noindent \textbf{Case 3.3} For any other buyer $i\notin \mathcal{Y}$, her utility is 0 when she executes our intended computational action. To gain some payoffs, she needs to output a larger enough value such that the resale path changes and she becomes the final winner. In this case, her purchasing price must be the highest valuation among the remaining buyers and her utility changes to $u_i=v_i-\bar{p}_{i}\leq0$ where the purchasing price $\bar{p}_{i}$ must be the old highest bid larger than her valuation $v_i$. The new utility is worse than truthfully executing.
Therefore, any buyer cannot gain a higher utility by disobeying the intended computational action.

Taking all together, 
the distributed implementation is ex-post incentive compatible and the intended strategy $s^M$ is an ex-post Nash equilibrium.
\end{proof}

Ex-post IC is commonly achieved in the distributed mechanism design literature and IC is impossible to get. The intuition is that if one agent does not follow the designed computation process, other agents may also change their behaviors to correct the agent’s misbehavior. However, the centralized SRA is IC because the execution is done by the center and for each randomly chosen resale path, the execution is similar to IDM which is proved to be IC in~\cite{li2017mechanism}.

Finally, we can guarantee that our mechanism will not sacrifice the seller's revenue compared to the traditional VCG only among neighbors, which encourages the seller to apply our mechanism.
\begin{proposition}\label{thm:revenue}
The seller's revenue of the connecting-based distributed auction and its centralized reduction mechanism is always no less than that of traditional VCG without diffusion.
\end{proposition}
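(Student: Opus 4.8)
The plan is to show that both mechanisms collect, as the original seller's revenue, at least $v^{2\text{nd}}_{r_S}$, the second-highest valuation among the seller's own neighbours $r_S$ (with the convention $v^{2\text{nd}}_{r_S}=0$ when $|r_S|\le 1$); this quantity is exactly the revenue of a second-price/VCG auction run among $r_S$ with truthful bids and no diffusion. If nobody is willing to pay a positive price, the VCG revenue is $0$ and the claim is immediate, so we may assume otherwise. Throughout we work under the intended strategy $s^M$, so reported bids coincide with valuations.

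For the distributed SRA: under $s^M$ the original seller $S$ is involved in exactly one money transaction, namely as the seller of the first local auction $\hat M_S$, whose participant set is $r'_S=r_S$ and whose reserve (purchasing) price is $\bar p_S=0$. By Eq.~\eqref{eq:payment} her revenue is the selling price $\max\{\bar p_S, b^{2\text{nd}}_{r'_S}\}=b^{2\text{nd}}_{r'_S}$, the second-highest aggregated bid among her neighbours. Since the intended computational action sets $b_k=\max(B_k\cup\{v_k\})\ge v_k$ for every $k\in r_S$, the sorted vector of aggregated bids dominates the sorted vector of valuations coordinatewise, hence $b^{2\text{nd}}_{r'_S}\ge v^{2\text{nd}}_{r_S}$. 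As this holds in every random realization of the auction, the distributed mechanism's revenue is always at least $v^{2\text{nd}}_{r_S}$.

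For the centralized reduction mechanism I would fix a spanning tree $T\in\mathcal T$ and invoke Theorem~\ref{thm:CRM}, which says the CRM reproduces the SRA's payments on $T$. Writing the diffusion path as $h^z=(S,h_1,\dots,h_l=w)$ and telescoping the payments in Eq.~\eqref{eq:crm_p} along it, together with the identity $\bar p_{h_1}=v^{1\text{st}}_{T_{-h_1}}$ established in the proof of Theorem~\ref{thm:CRM}, the seller's revenue on $T$ equals $v^{1\text{st}}_{T_{-h_1}}$. Because $h_1\in r_S$ and every other neighbour $k\in r_S\setminus\{h_1\}$ lies in its own subtree of $T$, disjoint from $h_1$'s subtree, we get $v^{1\text{st}}_{T_{-h_1}}\ge\max_{k\in r_S\setminus\{h_1\}}v_k\ge v^{2\text{nd}}_{r_S}$. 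Averaging over the $|\mathcal T|$ spanning trees as in Eq.~\eqref{eq:crm_all} preserves the inequality. Equivalently, one may simply note that the CRM is the centralized reduction of the SRA, so its expected revenue equals the SRA's, which was already bounded above.

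The main obstacle is not depth but bookkeeping: one must (i) pin down that ``seller's revenue'' is unambiguously the single price charged in $\hat M_S$, i.e.\ the telescoped total $v^{1\text{st}}_{T_{-h_1}}$, rather than something confounded with the intermediaries' resale margins; (ii) verify the telescoping identity and that $h_1$ is indeed a neighbour of $S$ whose subtree is disjoint from those of the other neighbours; and (iii) dispatch the degenerate cases — the seller keeping the item (the compared revenue is then $0$), ties for the global maximum valuation (random tie-breaking does not affect the $b_k\ge v_k$ domination argument), and $|r_S|\le 1$. The crux is just the monotonicity $b_k\ge v_k$ of the intended aggregation plus the telescoping of the payment chain along the resale path.
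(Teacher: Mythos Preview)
Your proof is correct and follows essentially the same line as the paper's: identify the seller's revenue as $b^{2\text{nd}}_{r_S}$ from the first local auction and use the monotonicity $b_k\ge v_k$ of the intended aggregation to bound it below by the VCG-without-diffusion revenue $v^{2\text{nd}}_{r_S}$. Your treatment is more thorough---you spell out the CRM side via telescoping of Eq.~\eqref{eq:crm_p} and handle degenerate cases---but the core argument is identical to the paper's.
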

\begin{proof}
As we decompose the entire sale into a series of local auctions, the seller's revenue only depends on the first local auction hosted by $S$. From Eq.\ref{eq:payment}, the seller's revenue is $b^{2\text{nd}}_{r_S}$ which is the second-highest aggregated bid among all neighbors. Each neighbor of the seller represents a branch to aggregate bids among the subtree rooted at the neighbor, so each neighbor's aggregated bid will always no less than her own bid. Therefore, the second-highest aggregated bid will always no less than the second-highest bid among neighbors which is the revenue under traditional VCG among neighbors without diffusion.
\end{proof}

\subsection{Experimental Results}
At last, since our distributed mechanism does not rely on any existing centralized mechanism, we conduct experiments to show the differences between the centralized reduction of our SRA and others. We choose the IDM mechanism~\cite{li2017mechanism} as a representative to compare with, which is the first proposed IC diffusion mechanism.

\noindent \textbf{Experiment Settings.} We conduct our experiment on the graph shown in Figure~\ref{fig:experiment}. The valuation distribution of each agent is related to her distance to the seller, i.e., the buyer away from the seller has a higher chance to have a higher valuation. This is to demonstrate the characteristics of diffusion auctions more clearly since the goal is to find higher bids in the network. Concretely, we set the valuation distribution of a buyer $i$ with depth (i.e., the length of the shortest path from $S$ to $i$) $1\leq d_i\leq 4$ to be U[$0.1+0.1d_i$, $0.6+0.1d_i$]. We sample $10^4$ instances according to the given distributions and run the centralized reduction of SRA and IDM on these instances respectively. Especially, since the centralized reduction of SRA is a randomized mechanism, we take the average of $10^3$ times running as the estimated expected result. We record the winning probabilities and average utilities over all $10^4$ samples of two mechanisms.

\begin{figure}[htbp]
\centering
    \includegraphics[width=.20\linewidth]{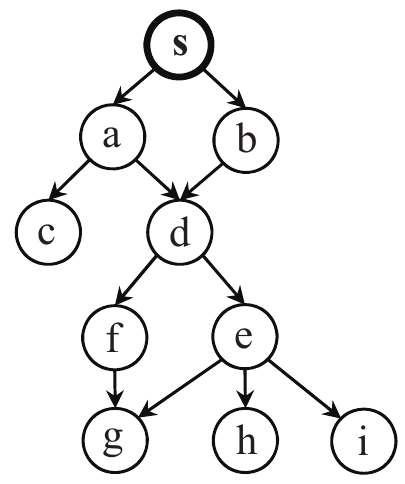}
\caption{The network for the experiment.}
\label{fig:experiment}
\end{figure}

\noindent \textbf{Results and Observations.} We summarize the numerical results of the experiment in Figure~\ref{fig:prob} and Figure~\ref{fig:utility}. In Figure~\ref{fig:prob}, it shows the distribution of winning probabilities of all buyers, and in Figure~\ref{fig:utility}, it shows the expected utilities of all buyers. From these results, we can observe that our mechanism gives more equal chances to win for all buyers. Moreover, it rewards more buyers and especially gives some utilities to buyers near to the seller and those non-critical buyers (e.g., buyer $a$, $b$), who have almost no rewards in the IDM. This will incentivize the buyers near the seller or the non-critical buyers to be more willing to engage in the diffusion process.



\begin{figure}[htbp]
	\centering
	\begin{minipage}{0.495\linewidth}
		\centering
		\includegraphics[width=0.999\linewidth]{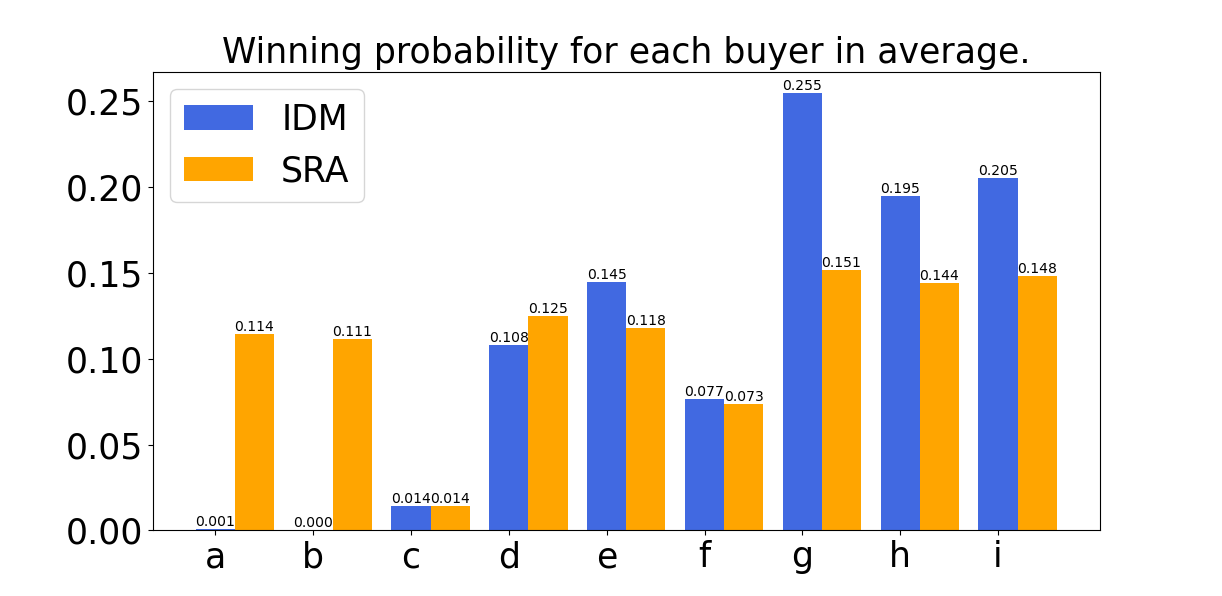}
		\caption{Winning probability for each buyer in average.}
        \label{fig:prob}
	\end{minipage}
	\begin{minipage}{0.495\linewidth}
		\centering
		\includegraphics[width=0.999\linewidth]{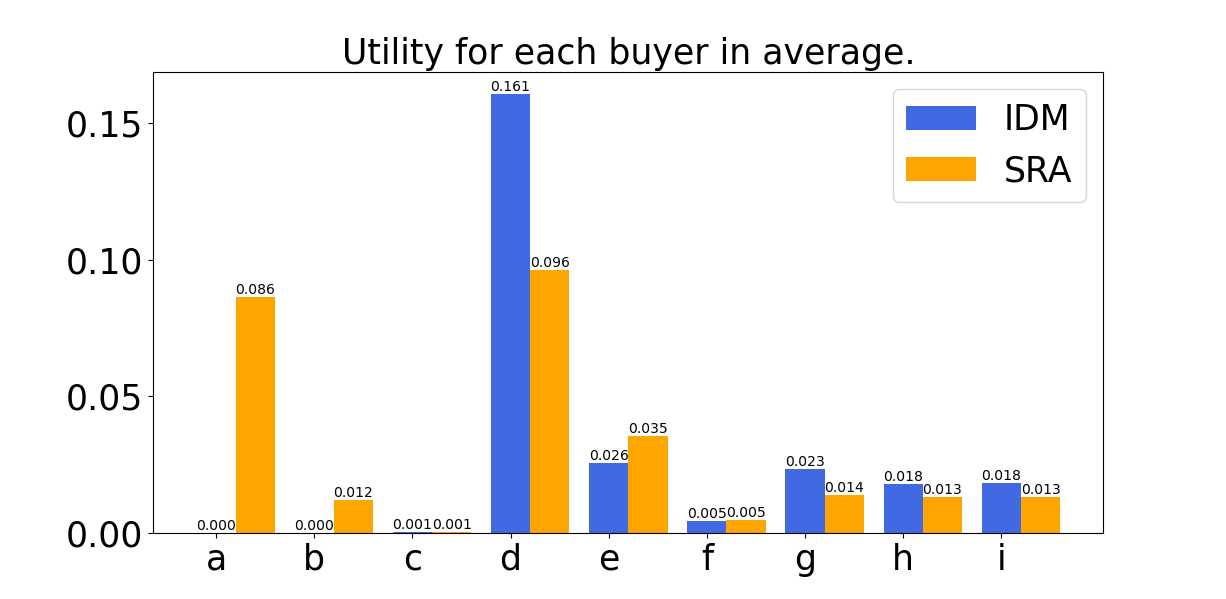}
		\caption{Utility for each buyer in average.}
        \label{fig:utility}
	\end{minipage}
\end{figure}

Therefore, our mechanism can not only distributively computed without any trustworthy centers (which is our core contribution), but also give incentives to more buyers including those non-critical buyers to invite others in the centralized version, which is an interesting and worthwhile extra effects.


\section{Conclusion}\label{sec:conclusion}
In this paper, 
we 
propose the first distributed mechanism in social networks called the Sequential Resale Auction. 
The distributed auction achieves complete decentralization without relying on any trustworthy third party.
We also present the centralized reduction mechanism of our distributed auction to exhibit the extra contribution of our mechanism, which provides a novel way to reward more buyers including those non-critical buyers. 



\begin{acks}
This work is supported by Science and Technology Commission of Shanghai Municipality (No. 23010503000 and No. 22ZR1442200), and Shanghai Frontiers Science Center of Human-centered Artificial Intelligence (ShangHAI).
\end{acks}



\bibliographystyle{ACM-Reference-Format} 
\bibliography{sample}


\end{document}